\def\BibTeX{{\rm B\kern-.05em{\sc i\kern-.025em b}\kern-.08em
    T\kern-.1667em\lower.7ex\hbox{E}\kern-.125emX}}
\acrodef{C-V2X}{cellular vehicle-to-everything}
\acrodef{V2I}{vehicle-to-infrastructure}
\acrodef{V2V}{vehicle-to-vehicle}
\acrodef{V2P}{vehicle-to-pedestrian}
\acrodef{RB}{resource block}
\acrodef{NR}{new radio}
\acrodef{QoS}{quality-of-service}
\acrodef{SINR}{signal-to-interference-plus-noise ratio}
\acrodef{CSI}{channel state information}
\acrodef{BS}{base station}
\acrodef{TxV}{transmitting vehicle}
\acrodef{RxV}{receiving vehicle}
\acrodef{RSSI}{received signal strength indicator}
\acrodef{RSS}{received signal strength}
\acrodef{PDF}{probability distribution function}
\acrodef{RV}{random variable}
\acrodef{MSE}{mean square error}
\acrodef{HR}{hazard rate}
\acrodef{ITS}{intelligent transportation systems}
\acrodef{PSFCH}{physical sidelink feedback channel}
\acrodef{PUCCH}{physical uplink control channel}
\acrodef{RV}{random variable}
\acrodef{i.i.d.}{independent, identically distributed}
\acrodef{GMM}{Gaussian mixture model}
\acrodef{TVaR}{tail value at risk}
\acrodef{CRF}{conditional relative frequency}
\acrodef{ORF}{overall relative frequency}
\acrodef{ISAC}{integrated sensing and communication}
\acrodef{IOC}{inverse optimal control}
\acrodef{mmWave}{millimeter wave}
\acrodef{DoF}{degrees of freedom}
\acrodef{VRU}{vulnerable road user}
\acrodef{AV}{autonomous vehicle}
\acrodef{HD}{high-definition}
\acrodef{NCV}{non-connected vehicle}
\acrodef{CV}{connected vehicle}
\acrodef{AoA}{angle-of-arrival}
\acrodef{AoD}{angle-of-departure}
\acrodef{ULA}{uniform linear array}
\acrodef{EKF}{extended Kalman filtering}
\acrodef{SNR}{signal-to-noise ratio}
\acrodef{VU}{vehicular user}
\acrodef{RIS}{reconfigurable intelligent surface}
\acrodef{AWGN}{additive white Gaussian noise}
\acrodef{MLE}{maximum likelihood estimator}
\acrodef{RSU}{roadside unit}
\acrodef{RCS}{radar cross section}
\acrodef{AD}{autonomous driving}
\acrodef{CAV}{connected autonomous vehicle}
\acrodef{ECM}{electronic countermeasure}
\acrodef{CP}{cooperative perception}
\newtheorem{theorem}{Theorem}
\newtheorem{lemma}{Lemma}
\newtheorem{remark}{Remark}
\begin{document}

\title{Sensing Safety Analysis for Vehicular Networks with Integrated Sensing and Communication (ISAC)\vspace{-0.2cm}}

\author{\normalsize Tingyu Shui\textsuperscript{1}, Walid Saad\textsuperscript{1}, and Mingzhe Chen\textsuperscript{2} \\ 
\textsuperscript{1}Bradley Department of Electrical and Computer Engineering, Virginia Tech, Alexandria, VA, 22305, USA.\\
\textsuperscript{2}Department of Electrical and Computer Engineering and Frost Institute for Data Science and Computing, \\ University of Miami, Coral Gables, FL, 33146, USA. \\
Emails:\{tygrady, walids\}@vt.edu, mingzhe.chen@miami.edu\vspace{-0.4cm}}
\maketitle

\begin{abstract}
\Ac{ISAC} emerged as a key feature of next-generation 6G wireless systems, allowing them to achieve high data rates and sensing accuracy. While prior research has primarily focused on addressing communication safety in \ac{ISAC} systems, the equally critical issue of sensing safety remains largely ignored. In this paper, a novel threat to the sensing safety of \ac{ISAC} vehicle networks is studied, whereby a malicious \ac{RIS} is deployed to compromise the sensing functionality of a \ac{RSU}. Specifically, a malicious attacker dynamically adjusts the phase shifts of an \ac{RIS} to spoof the sensing outcomes of a \ac{VU}'s echo delay, Doppler shift, and \ac{AoD}. To achieve spoofing on Doppler shift estimation, a time-varying phase shift design on the \ac{RIS} is proposed. Furthermore, the feasible spoofing frequency set with respect to the Doppler shift is analytical derived. Analytical results also demonstrate that the \ac{MLE} of the \ac{AoD} can be significantly misled under spoofed Doppler shift estimation. Simulation results validate our theoretical findings, showing that the \ac{RIS} can induce a spoofed velocity estimation from $0.1$~m/s to $14.9$~m/s for a \ac{VU} with velocity of $10$~m/s, and can cause an \ac{AoD} estimation error of up to $65^\circ$ with only a $5^\circ$ beam misalignment.
\end{abstract}
\begin{IEEEkeywords}
\Ac{ISAC}, Sensing safety, Vehicular network, \Ac{RIS}, Spoofing.
\end{IEEEkeywords}
\acresetall
\section{Introduction}
\Ac{ISAC} is a promising 6G technology for vehicular applications that require both high-quality communication and high-accuracy sensing such as \ac{CAV} and \ac{CP}. \Ac{ISAC} allows a wireless \ac{BS} to simultaneously provide communication and sensing functions through the co-design of the hardware architecture, transmitted waveforms, and signal processing algorithms. However, the joint design of sensing and communications coupled with the broadcast nature of wireless transmission in \ac{ISAC} networks will introduce critical security and safety problems\cite{9755276}.

Aligned with the dual functionalities, \ac{ISAC} networks' security concerns should be addressed from two distinct perspectives: \emph{communication safety} and \emph{sensing safety}. A number of recent works have studied the communication safety issue, in which \ac{ISAC} networks are designed to be resilient to eavesdropping attacks \cite{9199556, 9737364, 10227884,10781436}. Specifically, these studies consider eavesdroppers that act as sensing targets to intercept \ac{ISAC} signals. However, the issue of sensing safety received far less attention. Unlike information leakage, which is the primary concern in communication safety, sensing safety attacks focus on adversarially manipulating and spoofing the sensing outcomes of an \ac{ISAC} system\cite{10839241}. Such attacks on sensing safety may lead to severe consequences. For example, in an \ac{ISAC}-enabled vehicular application, a pedestrian may become undetectable to a \ac{CAV} under a spoofing attack that targets the sensing outcome.

There are only a few preliminary works that looked at this sensing safety problems and its challenges \cite{10856886, 10443321, 10575930, 10634199}. Specifically, these works, from a defender's perspective, aim to conceal targets from being detected by unauthorized \ac{ISAC} stations. Inspired by \ac{ECM} techniques, the authors in \cite{10856886, 10443321, 10575930} leverage a novel \ac{RIS}-aided framework where the sensing echos received at the unauthorized \ac{ISAC} station are suppressed while those received at the authorized \ac{ISAC} station are enhanced. Particularly, the \ac{RIS}, acting as an intelligent \ac{ECM}, is mounted on the target for directional radar stealth. In \cite{10634199}, the \ac{RIS} is further exploited to simultaneously conceal the target and spoof the unauthorized \ac{ISAC} station. By redirecting the detection signal to clutter, the \ac{RIS} generates a deceptive \ac{AoA} sensing outcome on the unauthorized \ac{ISAC} stations. However, the works in \cite{10856886, 10443321, 10575930, 10634199}, which leverage \ac{RIS} to enhance the sensing safety of \ac{ISAC} networks, also raise a critical question: \emph{Can a malicious \ac{RIS} be exploited to compromise the sensing safety of \ac{ISAC} networks instead? If so, how would this malicious \ac{RIS} be designed, and what impact would it have on the sensing outcomes?}

The main contribution of this paper is a novel \ac{RIS}-aided spoofing technique that can be shown to compromise the sensing safety of an \ac{ISAC} vehicular network. In our considered system, an attacker deploys a malicious \ac{RIS} in the coverage of a \ac{RSU}. Different from the directional radar stealth methods proposed in \cite{10856886, 10443321, 10575930, 10634199}, the malicious \ac{RIS} is fixed in location rather than being mounted on the sensing target, such as \acp{VU}, which makes the spoofing much more practical. We particularly propose a time-varying phase shifts design on the \ac{RIS} to spoof the \ac{RSU}'s estimation on the \ac{VU}'s echo delay, Doppler shift, and \ac{AoD}. We then derive the closed-form expressions of the phase shifts design and the feasible spoofing frequencies set with respect to the Doppler shift. Moreover, we analytically show that, the \ac{MLE} of the \ac{AoD} will also be jeopardized under the proposed \ac{RIS} spoofing. To our best knowledge, \textit{this is the first work that designs \ac{RIS} to compromise the sensing safety of \ac{ISAC} network and analyzes its impact on the resulted sensing outcome}. Simulation results demonstrate that, given a $10$~m/s true velocity for the \ac{VU}, the \ac{RIS} can cause a spoofed velocity estimation from $0.1$~m/s to $14.9$~m/s and a spoofed \ac{AoD} estimation of deviation up to $65^\circ$, even with a beam misalignment of only $5^\circ$ for the \ac{RSU}.

The rest of the paper is organized as follows. The system model is presented in Section \ref{Section II}. The design of \ac{RIS} spoofing and the analysis of its impact on sensing safety are performed in Section \ref{Section III}. Section \ref{Section IV} presents the simulation results and Section \ref{Section V} concludes the paper.

\section{System Model}
\label{Section II}
Consider an \ac{ISAC} system supported by a full-duplex \ac{mmWave} \ac{RSU}. The \ac{RSU} is equipped with a \ac{ULA} of $N_t$ transmit antennas and $N_r$ receive antennas to provide sensing and downlink communications to a single-antenna \ac{VU}. By estimating the \ac{VU}'s kinetic states from echo signals, the \ac{RSU} can support critical applications such as beam tracking and \ac{CAV} trajectory planning. A two-dimensional Cartesian coordinate system is used with the \ac{RSU} located at its origin. The goal of the \ac{RSU} is to estimate the coordinates $(x_{\textrm{V}},y_{\textrm{V}})$ and velocity $v$ of the \ac{VU}. However, a malicious \ac{RIS} is deployed to compromise the \ac{RSU}'s sensing functionality by manipulating the echo signals, as shown in Fig. \ref{System_Model}. Following a standard convention in the literature \cite{9171304}, we focus on the case in which the \ac{VU} moves along a straight road parallel to the \ac{RSU}'s \ac{ULA}\footnote{The extension to a non-parallel case is straightforward by rotating the coordinate system.}. 
\subsection{Signal Model}
\vspace{-3pt}
Let $s(t) \in \mathbb{C}$ be the \ac{RSU}'s transmitted \ac{ISAC} symbol with unit power, i.e., $s(t)s^{*}(t) = 1$. Given the precoding vector $\boldsymbol{w} \in \mathbb{C}^{N_t \times 1}$ at the \ac{RSU}, the echo signal reflected by the \ac{VU} can be modeled as follows:
\begin{equation}
\label{echo_V}
    \boldsymbol{y}_{\textrm{E,V}}(t) = \sqrt{P} \gamma_{\textrm{B}} \beta_{\textrm{V}} e^{j 2 \pi \mu_{\textrm{V}} t} \boldsymbol{b}_B(\theta_{\textrm{V}}) \boldsymbol{a}^{H}_B(\theta_{\textrm{V}}) \boldsymbol{w} s(t - \tau_{\textrm{V}}),
\end{equation}
where $P$ is the transmit power of the \ac{RSU}, $\gamma_{\textrm{B}} = \sqrt{N_t N_r}$ is the array gain factor, and $\beta_{\textrm{V}} = \sqrt{\frac{\lambda^2 \kappa_{\textrm{V}}}{64 \pi^3 d_{\textrm{V}}^4}} e^{\frac{j 4 \pi d_{\textrm{V}}}{\lambda}}$ combines the complex path gain of the \ac{RSU}-\ac{VU}-\ac{RSU} link and the \ac{RCS} $\kappa_{\textrm{V}}$ of the \ac{VU}. The distance between the \ac{RSU} and the \ac{VU} is $d_{\textrm{V}}$ and the carrier wavelength is $\lambda$. Meanwhile, the Doppler shift $\mu_{\textrm{V}}$, \ac{AoD} $\theta_{\textrm{V}}$, and double-path echo delay $\tau_{\textrm{V}}$ of the \ac{VU} are included in \eqref{echo_V}, where $\boldsymbol{a}_B(\theta)$ and $\boldsymbol{b}_B(\theta)$ are the steering vector of the transmitting and receiving antennas, respectively, at the \ac{RSU}. By assuming a half-wavelength antenna space, $\boldsymbol{a}_B(\theta)$ and $\boldsymbol{b}_B(\theta)$ will be given by:
\begin{equation}
\label{steering_t} 
    \boldsymbol{a}_B(\theta)= \frac{1}{\sqrt{N_t}} \left[1, e^{-j \pi \cos \theta}, \ldots, e^{-j \pi\left(N_t-1\right) \cos \theta}\right]^T,
\end{equation}    
\vspace{-8pt}
\begin{equation}
\label{steering_r} 
    \boldsymbol{b}_B(\theta)= \frac{1}{\sqrt{N_r}} \left[1, e^{-j \pi \cos \theta}, \ldots, e^{-j \pi\left(N_r-1\right) \cos \theta}\right]^T.
\end{equation}   

\begin{figure}[t]
	\centering
	\includegraphics[scale=0.43]{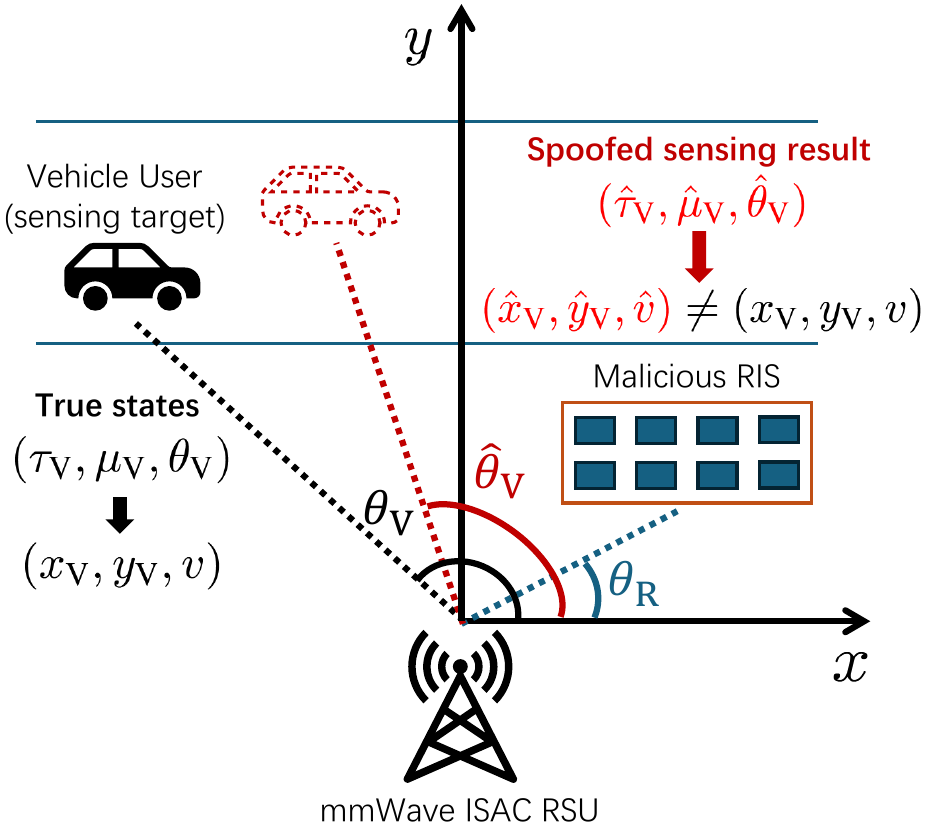}
    \vspace{-5pt}
	\caption{\small{System model of the considered \ac{ISAC} network under the spoofing of a malicious \ac{RIS}.}}
        \vspace{-10pt}
 \label{System_Model}
\end{figure}
The sensing process on the \ac{RSU} should estimate $(x_{\textrm{V}},y_{\textrm{V}})$ and $v$ of the \ac{VU}, which can be derived from $\tau_{\textrm{V}}$, $\mu_{\textrm{V}}$, and $\theta_{\textrm{V}}$. However, in the presence of the malicious \ac{RIS}, the echo received by the \ac{RSU} is a composition of the legitimate echo reflected by the \ac{VU} and the spoofing echo adversarially reflected by the \ac{RIS}. Assume that the reflecting elements of the \ac{RIS} are formed as a \ac{ULA} of $M$ elements, the spoofing echo from the \ac{RIS} can be similarly given by:
\begin{equation}
\label{echo_R}
\begin{aligned}
    \boldsymbol{y}_{\textrm{E,R}}(t) 
    = & \sqrt{P} \gamma_{\textrm{B}} \beta_{\textrm{R}} \boldsymbol{b}_{\textrm{B}}(\theta_{\textrm{R}}) \boldsymbol{a}^{H}_{\textrm{R}}(\theta_{\textrm{B}}) \text{diag}[\boldsymbol{\phi}(t)] \times \\
    & \ \boldsymbol{b}_{\textrm{R}}(\theta_{\textrm{B}}) \boldsymbol{a}^{H}_{\textrm{B}}(\theta_{\textrm{R}}) \boldsymbol{w} s(t - \tau_{\textrm{R}}),
\end{aligned}
\end{equation}
where $\beta_{\textrm{R}} = \sqrt{\frac{\lambda^2 \kappa_{\textrm{R}}}{64 \pi^3 d_{\textrm{R}}^4}} e^{\frac{j 4 \pi d_{\textrm{R}}}{\lambda}}$ combines the complex path gain of the \ac{RSU}-\ac{RIS}-\ac{RSU} link and the \ac{RCS} $\kappa_{\textrm{R}}$ of the \ac{RIS}. According to \cite{9732917}, we have $\kappa_{\textrm{R}} = \frac{4 \pi \eta S^2}{\lambda^2}$ with an \ac{RIS}'s refection efficiency $\eta$, area $S$, and operating wavelength $\lambda$. Similarly, $\boldsymbol{b}_R(\cdot)$ and $\boldsymbol{a}_R(\cdot)$, $\theta_{\textrm{B}}$, and $\tau_{\textrm{R}}$ are the steering vectors, \ac{AoA} (and also \ac{AoD}), and echo delay of the \ac{RIS}. The phase shifts at the \ac{RIS} are given by $\boldsymbol{\phi}(t) = \left[ e^{j \phi_1(t)}, e^{j \phi_2(t)}, \ldots, e^{j \phi_M(t)}\right]$ and unit reflection amplitudes are assumed for simplicity. Note that $\boldsymbol{\phi}(t) $ must be time-varying to spoof the sensing process, as we will explain in Section \ref{Section III}. Thus, the composite echo received by the \ac{RSU} can be given by:
\begin{equation}
\label{joint echo}
    \boldsymbol{y}_{\textrm{E}}(t) = \boldsymbol{y}_{\textrm{E,R}}(t) + \boldsymbol{y}_{\textrm{E,V}}(t) + \boldsymbol{z}_{\textrm{E}}(t),
\end{equation}
where $\boldsymbol{z}_{\textrm{E}}(t) \sim \mathcal{CN}(0, \sigma^2 \boldsymbol{I}_M)$ is the \ac{AWGN} at the \ac{RSU}'s receiving antennas.
\subsection{Sensing Spoofing by Malicious \ac{RIS}}
\vspace{-3pt}
We focus on the sensing process of a short interval $T$, defined as an epoch, during which $(x_{\textrm{V}},y_{\textrm{V}})$ and $v$ are assumed to be constant \cite{9947033}. Within an arbitrary epoch, the \ac{RSU} first determines $\mu_{\textrm{V}}$ and $\theta_{\textrm{V}}$ through a standard matched-filtering technique \cite{9171304}, after which the received signal in \eqref{joint echo} is compensated in both time and frequency domain for further estimating $\theta_{\textrm{V}}$. In particular, the matched-filtering output of $\boldsymbol{y}_{\textrm{E}}(t)$ is defined as follows:
\begin{equation}
\label{match filter}
C(\tau, \mu) \triangleq \left| \int_{0}^{T} \boldsymbol{y}_{\textrm{E}}(t) s^{*}(t - \tau) e^{-j 2 \pi \mu t} dt \right|^2.
\end{equation}
Thus, the estimated echo delay and Doppler shift can be given by $\left( \hat{\tau}_{\textrm{V}}, \ \hat{\mu}_{\textrm{V}} \right) = \arg \max_{\tau, \ \mu} C(\tau, \mu)$. Moreover, we consider the \ac{MLE} of $\theta_{\textrm{V}}$, which can be given by:
\begin{equation}
\label{MLE spoofed}
    \hat{\theta}_{\textrm{V}}=\arg \max _{\theta_{\textrm{V}}} p(\hat{\boldsymbol{y}}_{\textrm{E}} \mid \theta_{\textrm{V}}).
\end{equation}
In \eqref{MLE spoofed}, $\hat{\boldsymbol{y}}_{\textrm{E}} = \int_{0}^{T} \boldsymbol{y}_{\textrm{E}}(t) s^{*}(t - \hat{\tau}_{\textrm{V}}) e^{-j 2 \pi \hat{\mu}_{\textrm{V}} t} dt$ represents the echo compensated by $\hat{\tau}_{\textrm{V}}$ and $\hat{\mu}_{\textrm{V}}$.
\addtolength{\topmargin}{0.03in}
\begin{figure}[t]
	\centering
	\includegraphics[scale=0.6]{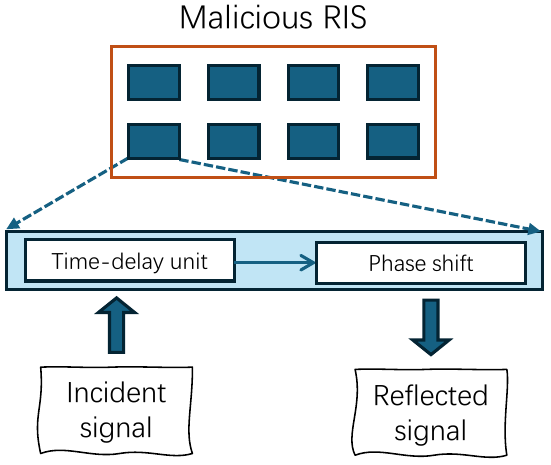}
    \vspace{-5pt}
	\caption{\small{The architecture of the adjustable-delay \ac{RIS}.}}
    \vspace{-10pt}
 \label{time-delay_RIS}
\end{figure}

The goal of the malicious \ac{RIS} is to hinder the \ac{RSU} from obtaining accurate $\hat{\tau}_{\textrm{V}}$, $\hat{\mu}_{\textrm{V}}$, and $\hat{\theta}_{\textrm{V}}$. In other words, the spoofing echo $\boldsymbol{y}_{\textrm{E,R}}(t)$ given in \eqref{echo_R} is expected to cause a deviation in the peak position of $C(\tau, \mu)$ and further distort the accuracy of $\hat{\theta}_{\textrm{V}}$. In typical \ac{ISAC} system of large bandwidth, the delay resolution of $C(\tau, \mu)$ is sufficiently high to resolve the two echoes given in \eqref{echo_V} and \eqref{echo_R}. Thus, the spoofed echo can be mitigated if the echo delay difference satisfies $|\tau_{\textrm{V}} - \tau_{\textrm{R}}| \geq \tau_0$, where $\tau_0$ is the effective main-lobe width of $C(\tau, \mu)$ in the time domain. To successfully conduct the echo spoofing, an adjustable-delay \ac{RIS} \cite{10423078} is implemented in our system, as shown in Fig. \ref{time-delay_RIS}. Here, a time-delay unit is cascaded before the \ac{RIS}'s phase shifter such that it is able to store and retrieve the impinging signals with a designed delay $\Delta t_{\textrm{R}}$. As a result, the echo delay in \eqref{echo_R} will be: $\tau_{\textrm{R}} = \Delta t_{\textrm{R}} + 2 \frac{d_{\textrm{R}}}{c}$. 
\begin{remark}
    The reason for incorporating the adjustable-delay \ac{RIS} is to ensure that $|\tau_{\textrm{V}} - \tau_{\textrm{R}}| < \tau_0$. Thus, $\hat{\tau}_{\textrm{V}}$ is not spoofed. Moreover, hereinafter, we focus on the case $|\tau_{\textrm{V}} - \tau_{\textrm{R}}| = 0$. As a result, a necessary condition for the \ac{RIS} echo spoofing is $0 \leq \Delta t_{\textrm{R}} = 2 \frac{ d_{\textrm{V}} - d_{\textrm{R}} }{c} \leq \Delta_{\textrm{max}}$ with $\Delta_{\textrm{max}}$ being the maximum adjustable-delay.
\end{remark}

\section{Design and Analysis of \ac{RIS} Spoofing}
\label{Section III}
In this section we focus on the design of the \ac{RIS} for spoofing and analyze its impact on the sensing outcome. 
\subsection{Dynamic Phase Shifts Design}
\vspace{-3pt}
First, we can rewrite the matched-filtering output in \eqref{match filter} as:
\begin{equation}
\begin{aligned}
\label{match filtering c}
C(\tau, \mu) = &\Biggl| \sqrt{P} \gamma_{\textrm{B}}  \beta_{\textrm{V}}  \boldsymbol{b}_B(\theta_{\textrm{V}}) \boldsymbol{a}^{H}_B(\theta_{\textrm{V}}) \boldsymbol{w} c_{\textrm{V}}(\tau, \mu ) + \\
& \ \sqrt{P} \gamma_{\textrm{B}}  \beta_{\textrm{R}} \boldsymbol{b}_{\textrm{B}}(\theta_{\textrm{R}}) \boldsymbol{a}^{H}_{\textrm{B}}(\theta_{\textrm{R}}) \boldsymbol{w} c_{\textrm{R}}(\tau, \mu ) + \tilde{\boldsymbol{z}}_{\textrm{E}}(\tau, \mu)   \Biggl |^2, 
\end{aligned}
\end{equation}
where $c_{\textrm{V}}(\tau, \mu ) \triangleq \int_{0}^{T} s(t - \tau_\textrm{V})s^{*}(t - \tau) e^{ - j 2 \pi ( \mu -  \mu_{\textrm{V}}) t} dt $, $c_{\textrm{R}}(\tau, \mu ) \triangleq \int_{0}^{T} s(t - \tau_\textrm{V})s^{*}(t - \tau) \left[ \sum_{m=1}^{M} e^{ - j (2 \pi \mu t - \phi_m(t) )} \right] dt $, and $\tilde{\boldsymbol{z}}_{\textrm{E}}(\tau, \mu) \triangleq \int_{0}^{T} \boldsymbol{z}_{\textrm{E}}(t)s^{*}(t - \tau)e^{-j 2 \pi \mu t} dt $. To spoof the sensing outcome, we assume that the \ac{RIS} can derive $\Delta t_{\textrm{R}} = 2 \frac{ d_{\textrm{V}} - d_{\textrm{R}} }{c}$ by eavesdropping on the uplink communication of the \ac{VU} \cite{9724202}. Thus, $C(\tau, \mu)$ will only exhibit one peak around $\tau_{\textrm{V}}$ in the time domain. Moreover, we assume a perfect echo delay estimation $\hat{\tau}_{\textrm{V}} = \tau_{\textrm{V}}$ due to the high delay resolution in large bandwidth \ac{mmWave} \ac{ISAC} systems. In other words, we only focus on the spoofing on $\mu_\textrm{V}$ and $\theta_\textrm{V}$.

Given \eqref{match filtering c}, we can observe that the impact of \ac{RIS} spoofing stems from the term $\sqrt{P} \gamma_{\textrm{B}} \beta_{\textrm{R}} \boldsymbol{b}_{\textrm{B}}(\theta_{\textrm{R}}) \boldsymbol{a}^{H}_{\textrm{B}}(\theta_{\textrm{R}}) \boldsymbol{w} c_{\textrm{R}}(\tau, \mu )$. Assuming $\hat{\tau}_{\textrm{V}} = \tau_{\textrm{V}}$, we have $c_{\textrm{V}}(\hat{\tau}_{\textrm{V}}, \mu ) = \int_{0}^{T} e^{ - j 2 \pi ( \mu -  \mu_{\textrm{V}}) t} dt $ and $c_{\textrm{R}}(\hat{\tau}_{\textrm{V}}, \mu ) = \sum_{m=1}^{M}  \int_{0}^{T} e^{ - j (2 \pi \mu t - \phi_m(t)) )}dt $. We can find that the peak position of $c_{\textrm{V}}(\hat{\tau}_{\textrm{V}}, \mu )$ in the frequency domain occurs at $\mu = \mu_{\textrm{V}}$. Thus, a natural analogy is to set $\phi_m(t) = 2 \pi \mu_m t$ for artificially creating $M$ peak positions at frequencies $\mu_{1}, \ldots, \mu_{M}$. In practice, due to the hardware constraints of the \ac{RIS}, the phase shift $\phi_m(t)$ on the $m$-th element can be only given in a discrete form as:
\begin{equation}
\label{phase shift}
 \phi_{m}(t) = (2 \pi \mu_m \lceil \frac{t}{\Delta T} \rceil \Delta T) \mod{2 \pi},
\end{equation}
where $\Delta T$ represents the shortest time interval over which $\phi_{m}(t)$ can vary. Meanwhile, the modulo operation is applied because the phase shifts of the \ac{RIS} are typically confined to the range $0$ to $2\pi$. Given the design in \eqref{phase shift}, we focus on one fundamental spoofing case in which the $M$ spoofing frequencies are set equal, i.e, $\mu_1 = \mu_2 = \ldots = \mu_M = \tilde{\mu}$ such that $\hat{\mu}_{\textrm{V}} \triangleq \arg \max_{\mu} C(\hat{\tau}_{\textrm{V}}, \mu)$ is spoofed as $\tilde{\mu}$. To this end, the \ac{RIS} should select $\tilde{\mu}$ such that the peak of $C(\hat{\tau}_{\textrm{V}}, \mu)$ occurs at $\tilde{\mu}$ rather than $\mu_{\textrm{V}}$, i.e., $C(\hat{\tau}_{\textrm{V}}, \tilde{\mu}) \geq  C(\hat{\tau}_{\textrm{V}}, \mu_{\textrm{V}})$. 

\subsection{Impact on Doppler Shift Estimation}
\vspace{-3pt}
We begin by deriving the feasible spoofing frequency set $\mathcal{A} \triangleq \{ \tilde{\mu} \mid C(\hat{\tau}_{\textrm{V}}, \tilde{\mu}) \geq  C(\hat{\tau}_{\textrm{V}}, \mu_{\textrm{V}}) \}$. Since the matched-filtering output of noise is negligible compared to the spoofing echo, we can ignore $\tilde{\boldsymbol{z}}_{\textrm{E}}(\hat{\tau}_{\textrm{V}}, \mu)$ and rewrite $C(\hat{\tau}_{\textrm{V}}, \mu)$ as:
\begin{equation}
\label{match filter square}
\begin{aligned}
& C(\hat{\tau}_{\textrm{V}}, \mu) \\
= & \frac{P\gamma_{\textrm{B}}^2}{N_r} \sum_{n_r=1}^{N_r}  \Biggl|  \beta_{\textrm{V}} g_{n_r}(\theta_0, \theta_{\textrm{V}})  c_{\textrm{V}}(\hat{\tau}_{\textrm{V}}, \mu ) + \\
& \ \beta_{\textrm{R}} g_{n_r}(\theta_0, \theta_{\textrm{R}}) M c_{\textrm{R}}(\hat{\tau}_{\textrm{V}}, \mu ) \Biggr|^2 \\
= & \frac{PT^2\gamma_{\textrm{B}}^2}{N_r} \sum_{n_r=1}^{N_r}  \Biggl|  \beta_{\textrm{V}} g_{n_r}(\theta_0, \theta_{\textrm{V}}) e^{-j\pi(\mu-\mu_{\textrm{V}})T}  \operatorname{sinc}(T (\mu - \mu_{\textrm{V}})) + \\
& \ \beta_{\textrm{R}} g_{n_r}(\theta_0, \theta_{\textrm{R}}) M e^{j \pi \tilde{\mu} \Delta T} e^{- j \pi (\mu - \tilde{\mu})T} \operatorname{sinc}(\mu \Delta T) \times \\
& \ f(K,\pi \Delta T (\mu - \tilde{\mu}))\Biggr|^2,
\end{aligned}
\end{equation}
where we adopt a precoding vector $\boldsymbol{w} = \boldsymbol{a}_B(\theta_{0})$ steered towards $\theta_0$, which may be selected from a predefined codebook \cite{10740590}. Moreover, in \eqref{match filter square}, we define $f(K,x) = \frac{\sin{(Kx)}}{K \sin{x}}$, $\operatorname{sinc}(x) = \frac{\sin(\pi x)}{\pi x}$, and $g_{n_r}(\theta_1, \theta_2) = e^{-\frac{j \pi}{2} [ \cos{\theta_1(N_t -1)} - \cos{\theta_2} (N_t + 1 - 2n_r) ] } f(N_t, \frac{\pi}{2} (\cos{\theta_1} - \cos{\theta_{2}}) )$. To derive the expression of $\mathcal{A}$, we first present the following lemma pertaining to a set of infeasible spoofing frequencies.
\addtolength{\topmargin}{0.03in}
\begin{lemma}
\label{lemma 1}
    If the number of \ac{RIS} reflecting elements satisfies $M \gg \sqrt{ \frac{\kappa_{\textrm{V}}}{4 \pi \eta}} \frac{\lambda}{S}  \left| \frac{f(N_t, \frac{\pi}{2} (\cos{\theta_0} - \cos{\theta_{\textrm{V}}}) )}{f(N_t, \frac{\pi}{2} (\cos{\theta_0} - \cos{\theta_{\textrm{R}}}) )} \right|$, an infeasible spoofing frequency set can be given by $ \mathcal{A}_{\varnothing} = \left\{ \mu \mid \mu = \mu_{\textrm{V}} + \frac{n}{\Delta T}, n \in \mathcal{Z} \right\}$, i.e., $\mathcal{A}_{\varnothing} \cap \mathcal{A} = \varnothing$.
\end{lemma}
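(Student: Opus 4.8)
The plan is to show that every comb frequency $\tilde{\mu}=\mu_{\textrm{V}}+n/\Delta T$ with $n\in\mathcal{Z}\setminus\{0\}$ fails the spoofing test, i.e. $C(\hat{\tau}_{\textrm{V}},\tilde{\mu})<C(\hat{\tau}_{\textrm{V}},\mu_{\textrm{V}})$, so that $\tilde{\mu}\notin\mathcal{A}$. I would work directly from the closed form \eqref{match filter square}, writing $\rho\triangleq \frac{PT^2\gamma_{\textrm{B}}^2}{N_r}$, $K=T/\Delta T\in\mathcal{Z}$ for the (integer) number of phase updates per epoch, and abbreviating $f_{\textrm{V}}=f(N_t,\frac{\pi}{2}(\cos\theta_0-\cos\theta_{\textrm{V}}))$ and $f_{\textrm{R}}=f(N_t,\frac{\pi}{2}(\cos\theta_0-\cos\theta_{\textrm{R}}))$, so that $|g_{n_r}(\theta_0,\theta_{\textrm{V}})|=|f_{\textrm{V}}|$ and $|g_{n_r}(\theta_0,\theta_{\textrm{R}})|=|f_{\textrm{R}}|$ (the per-antenna phases drop out of the magnitude, and the result is $n_r$-independent). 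Two elementary observations drive everything. First, at $\mu=\tilde{\mu}$ the \ac{VU} contribution vanishes, since $\operatorname{sinc}(T(\tilde{\mu}-\mu_{\textrm{V}}))=\operatorname{sinc}(nK)=0$ for the nonzero integer $nK$; hence $C(\hat{\tau}_{\textrm{V}},\tilde{\mu})$ is purely the \ac{RIS} term, with $f(K,0)=1$, giving $C(\hat{\tau}_{\textrm{V}},\tilde{\mu})=\rho N_r M^2|\beta_{\textrm{R}}|^2|f_{\textrm{R}}|^2\operatorname{sinc}^2(\tilde{\mu}\Delta T)$. Second, at $\mu=\mu_{\textrm{V}}$ the \ac{RIS} Dirichlet factor is $f(K,-\pi n)$, a removable singularity with $|f(K,\pm\pi n)|=1$; that is, $\mu_{\textrm{V}}$ is itself a replica of the spoofing comb.

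The crux is then a comparison of the coarse envelope $\operatorname{sinc}(\mu\Delta T)$ shared by all replicas. Using $\tilde{\mu}\Delta T=\mu_{\textrm{V}}\Delta T+n$ together with $\sin(\pi(\mu_{\textrm{V}}\Delta T+n))=(-1)^n\sin(\pi\mu_{\textrm{V}}\Delta T)$, I obtain the exact identity $|\operatorname{sinc}(\tilde{\mu}\Delta T)|=\frac{|\sin(\pi\mu_{\textrm{V}}\Delta T)|}{\pi|\mu_{\textrm{V}}\Delta T+n|}$. Provided there is no Doppler aliasing, i.e. $|\mu_{\textrm{V}}\Delta T|<\tfrac12$, one has $|\mu_{\textrm{V}}\Delta T+n|>|\mu_{\textrm{V}}\Delta T|$ for every $n\neq 0$, so the gap $\Delta_{s}\triangleq|\operatorname{sinc}(\mu_{\textrm{V}}\Delta T)|-|\operatorname{sinc}(\tilde{\mu}\Delta T)|$ is strictly positive. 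In words: the coarse envelope multiplying every comb replica is strictly larger at the true Doppler $\mu_{\textrm{V}}$ than at the intended spoofing frequency $\tilde{\mu}$, so the \ac{RIS} unintentionally reinforces the genuine peak more than the one it is trying to plant.

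Finally I would fold in the \ac{VU} term and invoke the large-$M$ hypothesis. At $\mu=\mu_{\textrm{V}}$ both echoes coexist, but applying the reverse triangle inequality to each of the $N_r$ summands (each with \ac{RIS} magnitude $M|\beta_{\textrm{R}}||f_{\textrm{R}}||\operatorname{sinc}(\mu_{\textrm{V}}\Delta T)|$ and \ac{VU} magnitude $|\beta_{\textrm{V}}||f_{\textrm{V}}|$, both $n_r$-independent) gives the worst-case destructive-interference bound $C(\hat{\tau}_{\textrm{V}},\mu_{\textrm{V}})\geq \rho N_r (M|\beta_{\textrm{R}}||f_{\textrm{R}}||\operatorname{sinc}(\mu_{\textrm{V}}\Delta T)|-|\beta_{\textrm{V}}||f_{\textrm{V}}|)^2$. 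Comparing with the exact $C(\hat{\tau}_{\textrm{V}},\tilde{\mu})$ above and taking square roots reduces the desired inequality to $M|\beta_{\textrm{R}}||f_{\textrm{R}}|\Delta_{s}>|\beta_{\textrm{V}}||f_{\textrm{V}}|$, i.e. $M>\frac{|\beta_{\textrm{V}}||f_{\textrm{V}}|}{|\beta_{\textrm{R}}||f_{\textrm{R}}|\Delta_{s}}$. Substituting $\kappa_{\textrm{R}}=4\pi\eta S^2/\lambda^2$ into the definitions of $\beta_{\textrm{V}},\beta_{\textrm{R}}$ yields $\frac{|\beta_{\textrm{V}}|}{|\beta_{\textrm{R}}|}=\sqrt{\frac{\kappa_{\textrm{V}}}{4\pi\eta}}\frac{\lambda}{S}\frac{d_{\textrm{R}}^2}{d_{\textrm{V}}^2}$, which recovers precisely the threshold in the statement up to the $O(1)$ factors $\Delta_{s}^{-1}$ and $d_{\textrm{V}}^2/d_{\textrm{R}}^2\geq 1$; since the hypothesis is phrased with ``$\gg$'', these bounded factors are absorbed, establishing $\tilde{\mu}\notin\mathcal{A}$ for every $n\neq 0$ and hence $\mathcal{A}_{\varnothing}\cap\mathcal{A}=\varnothing$.

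I expect the main obstacle to be conceptual rather than computational: recognizing that the failure of these frequencies is caused by the common coarse envelope $\operatorname{sinc}(\mu\Delta T)$ making the comb replica at $\mu_{\textrm{V}}$ dominate the intended peak at $\tilde{\mu}$, and then controlling the coexistence of the \ac{VU} and \ac{RIS} echoes at $\mu_{\textrm{V}}$ in the least favorable phase configuration, which is exactly where the large-$M$ condition is needed to guarantee a strict, phase-robust separation. Matching the derived bound to the stated one is then just translating the path-gain ratio through $\kappa_{\textrm{R}}$.
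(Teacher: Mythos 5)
Your proposal is correct and follows essentially the same route as the paper's own proof: both rest on the observations that the \ac{VU} term nulls exactly at the comb frequencies ($\operatorname{sinc}(nK)=0$), that $\mu_{\textrm{V}}$ itself carries an \ac{RIS} replica with unit Dirichlet factor $|f(K,-\pi n)|=1$, that the comparison then reduces to the common envelope $\operatorname{sinc}(\mu \Delta T)$ being largest at $\mu_{\textrm{V}}$, and that the large-$M$ hypothesis (after translating $|\beta_{\textrm{V}}|/|\beta_{\textrm{R}}|$ through $\kappa_{\textrm{R}}$, absorbing the bounded factors into the $\gg$) handles the coexistence of the two echoes at $\mu_{\textrm{V}}$. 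Your two refinements --- the reverse-triangle-inequality lower bound on $C(\hat{\tau}_{\textrm{V}},\mu_{\textrm{V}})$ where the paper simply invokes its approximation (a), and the explicit no-aliasing condition $|\mu_{\textrm{V}}\Delta T|<\tfrac{1}{2}$ that the paper's inequality (b) tacitly requires in order to hold for all $n\neq 0$ --- make the argument tighter but do not change its substance.
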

\begin{proof}
To prove Lemma \ref{lemma 1}, we can directly show that the inequality $C(\hat{\tau}_{\textrm{V}}, \mu_{\textrm{V}}) \geq C(\hat{\tau}_{\textrm{V}}, \tilde{\mu})$ holds for $\tilde{\mu} \in \mathcal{A}_\varnothing$ as follows:
\begin{equation}
\begin{aligned}
    & C(\hat{\tau}_{\textrm{V}}, \mu_{\textrm{V}}) \\
    = & \frac{PT^2\gamma_{\textrm{B}}^2}{N_r} \sum_{n_r=1}^{N_r}  \Biggl| \beta_{\textrm{V}} g_{n_r}(\theta_0, \theta_{\textrm{V}}) + f(K,- \pi n)e^{ j \pi \frac{n}{\Delta T}T} \times\\
    & \ M \beta_{\textrm{R}} g_{n_r}(\theta_0, \theta_{\textrm{R}}) e^{j \pi \mu_{\varnothing} \Delta T} \operatorname{sinc}(\mu_{\textrm{V}} \Delta T) \Biggr|^2 \\
    \overset{(a)}{\approx} & \frac{PT^2\gamma_{\textrm{B}}^2}{N_r} \sum_{n_r=1}^{N_r} \Biggl| M \beta_{\textrm{R}} g_{n_r}(\theta_0, \theta_{\textrm{R}}) e^{j \pi \mu_{\varnothing} \Delta T} \operatorname{sinc}(\mu_{\textrm{V}} \Delta T) \Biggr|^2 \\
    \overset{(b)}{>} & \frac{PT^2\gamma_{\textrm{B}}^2}{N_r} \sum_{n_r=1}^{N_r}  \Biggl| M \beta_{\textrm{R}} g_{n_r}(\theta_0, \theta_{\textrm{R}}) e^{j \pi \mu_\varnothing \Delta T} \operatorname{sinc}(\mu_{\textrm{V}}\Delta T + n) \Biggr|^2 \\
    = & C(\hat{\tau}_{\textrm{V}}, \mu_{\varnothing}).
\end{aligned}
\end{equation}
Approximation (a) holds for $\forall n_r \in \left\{1, \ldots, N_r\right\}$, when $\left| f(K,- \pi n)e^{ j \pi \frac{n}{\Delta T}T} M \beta_{\textrm{R}} g_{n_r}(\theta_0, \theta_{\textrm{R}}) e^{j \pi \mu_{\varnothing} \Delta T} \operatorname{sinc}(\mu_{\textrm{V}} \Delta T) \right| \gg \left| \beta_{\textrm{V}} g_{n_r}(\theta_0, \theta_{\textrm{V}}) \right| $, which is equivalent to the condition $M \gg \sqrt{ \frac{\kappa_{\textrm{V}}}{4 \pi \eta}} \frac{\lambda}{S}  \left| \frac{f(N_t, \frac{\pi}{2} (\cos{\theta_0} - \cos{\theta_{\textrm{V}}}) )}{f(N_t, \frac{\pi}{2} (\cos{\theta_0} - \cos{\theta_{\textrm{R}}}) )} \right|$. Inequality (b) holds because $ \left| \operatorname{sinc}(\mu_{\textrm{V}}\Delta T ) \right| > \left| \operatorname{sinc}(\mu_{\textrm{V}}\Delta T + n) \right|, \forall n \in \mathcal{Z}$. 
\end{proof}

Given Lemma 1, if we only consider $\tilde{\mu} \in \mathcal{A}$, i.e., $\tilde{\mu} \notin \mathcal{A}_\varnothing$, $C(\hat{\tau}_{\textrm{V}}, \mu)$ in \eqref{match filter square} can be approximated by:
\begin{equation}
\label{match filter approx}
\begin{aligned}
C(\hat{\tau}_{\textrm{V}}, \mu) & \approx PT^2\gamma_{\textrm{B}}^2 \Biggl[ C_\textrm{V} \operatorname{sinc}^2(T (\mu - \mu_{\textrm{V}})) + \\
& \quad M^2 C_\textrm{R} \operatorname{sinc}^2(\mu \Delta T)  f^2(K,\pi \Delta T (\mu - \tilde{\mu})) \Biggr],
\end{aligned}
\end{equation}
where the cross-product terms in the quadratic expansion are ignored since $\ f(K,\pi \Delta T (\mu - \tilde{\mu}))\operatorname{sinc}(T (\mu - \mu_{\textrm{V}})) \approx 0$ when $\tilde{\mu} \notin \mathcal{A}_\varnothing$. In \eqref{match filter approx}, we further define $C_\textrm{V} \triangleq \beta_{\textrm{V}}^2 f^2(N_t, \frac{\pi}{2} (\cos{\theta_0} - \cos{\theta_{\textrm{V}}}) )$ and $C_\textrm{R} \triangleq \beta_{\textrm{R}}^2 f^2(N_t, \frac{\pi}{2} (\cos{\theta_0} - \cos{\theta_{\textrm{R}}}) )$. Another observation in \eqref{match filter approx} is that the impact of the spoofing frequency $\tilde{\mu}$ is periodic, as shown in the periodic spoofing term $M^2 C_\textrm{R} \operatorname{sinc}^2(\mu \Delta T)  f^2(K,\pi \Delta T (\mu - \tilde{\mu}))$. Specifically, the spoofing frequency $\tilde{\mu}$ will lead to multiple peaks at $\mu = \tilde{\mu} + \frac{n}{\Delta T}, n \in \mathcal{Z}$. Therefore, in order to derive $\mathcal{A}$, we only consider the highest peak defined in Lemma \ref{lemma 2}.
\begin{lemma}
\label{lemma 2}
    If the condition on $M$ in Lemma \ref{lemma 1} is satisfied, the highest peak of the spoofing term in \eqref{match filter approx} will be given by $ \Delta \tilde{\mu} \triangleq \tilde{\mu} \mod{ \frac{1}{\Delta T}}$.
\end{lemma}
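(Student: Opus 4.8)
The plan is to treat the spoofing term in \eqref{match filter approx} as the product of a fast, sharply-peaked Dirichlet factor $f^2(K,\pi\Delta T(\mu-\tilde{\mu}))$ and a slowly-varying envelope $\operatorname{sinc}^2(\mu\Delta T)$, and to identify the global maximizer by first locating all local peaks and then comparing their heights. Under the condition on $M$ in Lemma \ref{lemma 1}, this spoofing term dominates the $\mu_{\textrm{V}}$-lobe everywhere except in a vanishing neighbourhood of $\mu_{\textrm{V}}$, so the maximizer of $C(\hat{\tau}_{\textrm{V}},\mu)$ coincides with the peak of the spoofing term, and it suffices to analyse the latter.

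First I would pin down the peak locations. The kernel $f^2(K,\cdot)$ attains its maximal value of unity exactly when its argument is an integer multiple of $\pi$; a short L'H\^opital evaluation gives $f(K,\pi n)=(-1)^{n(K-1)}$, hence $f^2=1$, at $\mu=\tilde{\mu}+n/\Delta T$, $n\in\mathcal{Z}$ --- precisely the peaks already identified in the text. Because the kernel's main lobe is a factor $K$ narrower in $\mu$ than the envelope, each local maximum of the product sits, to leading order, at one of these $\mu=\tilde{\mu}+n/\Delta T$, so the corresponding peak height is $PT^2\gamma_{\textrm{B}}^2 M^2 C_{\textrm{R}}\operatorname{sinc}^2(\tilde{\mu}\Delta T+n)$.

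The crucial observation is that the comparison across peaks collapses onto the $1/x^2$ envelope alone. Writing $\operatorname{sinc}^2(\tilde{\mu}\Delta T+n)=\sin^2(\pi\tilde{\mu}\Delta T)/[\pi(\tilde{\mu}\Delta T+n)]^2$, the numerator $\sin^2(\pi\tilde{\mu}\Delta T)$ is independent of $n$, so maximizing the height over $n$ is exactly minimizing $|\tilde{\mu}\Delta T+n|$. Setting $\tilde{\mu}\Delta T=l+\Delta\tilde{\mu}\Delta T$ with $l\in\mathcal{Z}$ and $\Delta\tilde{\mu}\Delta T=(\tilde{\mu}\Delta T)\bmod 1\in[0,1)$, the minimizing integer returns the argument $\Delta\tilde{\mu}\Delta T$, i.e.\ the dominant peak lies at $\mu=\Delta\tilde{\mu}=\tilde{\mu}\bmod\frac{1}{\Delta T}$.

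The main obstacle I expect is the boundary case $\Delta\tilde{\mu}\Delta T\approx\tfrac{1}{2}$, where the alias at $\Delta\tilde{\mu}-1/\Delta T$ has essentially the same envelope value and competes for the maximum; I would resolve this either by restricting attention to feasible $\tilde{\mu}\in\mathcal{A}$ or by reading the modulo as the principal value nearest the origin, so that $\Delta\tilde{\mu}$ always names the alias closest to DC. A secondary point to verify is that neglecting the kernel--envelope cross term and the residual $\mu_{\textrm{V}}$-lobe does not shift the true maximizer, which again follows from the dominance condition on $M$ inherited from Lemma \ref{lemma 1}.
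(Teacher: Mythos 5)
Your proof follows essentially the same route as the paper's: both identify the candidate peaks of the spoofing term at $\mu=\tilde{\mu}+n/\Delta T$ from the sharply peaked kernel $f^2(K,\pi\Delta T(\mu-\tilde{\mu}))$ and then compare the peak heights through the $\operatorname{sinc}^2(\mu\Delta T)$ envelope, concluding that the maximum sits at the alias closest to zero, i.e.\ $\Delta\tilde{\mu}=\tilde{\mu}\bmod\frac{1}{\Delta T}$. Your extra details --- reducing the height comparison to minimizing $|\tilde{\mu}\Delta T+n|$ via the $n$-invariant numerator $\sin^2(\pi\tilde{\mu}\Delta T)$, and flagging the boundary case $\Delta\tilde{\mu}\Delta T\approx\tfrac{1}{2}$ where the negative-frequency alias competes --- merely sharpen the paper's terser version of the same argument rather than diverge from it.
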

\begin{proof}
Since the periodic function $f^2(K,\pi \Delta T (\mu - \tilde{\mu}))$ has sharp peaks at $\mu = \tilde{\mu} + \frac{n}{\Delta T}, n \in \mathcal{Z}$, the product $\operatorname{sinc}^2(\mu \Delta T)  f^2(K,\pi \Delta T (\mu - \tilde{\mu}))$ will also have peaks at these frequencies. Moreover, the highest peak among $\mu = \tilde{\mu} + \frac{n}{\Delta T}, n \in \mathcal{Z}$ depends on the magnitude of $\operatorname{sinc}^2(\mu \Delta T)$. Given $\mu = \tilde{\mu} + \frac{n}{\Delta T}, n \in \mathcal{Z}$, we further observe that $\operatorname{sinc}^2(\mu \Delta T) = \operatorname{sinc}^2(\tilde{\mu} \Delta T + n)$. Thus, we can find that the highest magnitude of $\operatorname{sinc}^2(\mu \Delta T)$ locates at $\Delta \tilde{\mu} \triangleq \tilde{\mu} \mod{ \frac{1}{\Delta T}}$, i.e., the positive frequency closest to $0$.
\end{proof}
\noindent
Given the result in Lemma \ref{lemma 2}, the possible spoofing frequency is restricted in $\left(0, \frac{1}{\Delta T}\right]$. In other words, an \ac{RIS} capable changing its phase shift more frequently, i.e., operating with a smaller $\Delta T$, will have a broader range of spoofing frequencies $\Delta \tilde{\mu}$. Moreover, the feasible spoofing frequency set should be redefined as $\mathcal{A} \triangleq \left\{ \Delta \tilde{\mu} \mid C(\hat{\tau}_{\textrm{V}}, \Delta\tilde{\mu}) \geq  C(\hat{\tau}_{\textrm{V}}, \mu_{\textrm{V}})\right\}$, which is derived next.
\begin{figure*}[t]
\vspace{-12pt}
    \begin{equation}
    \label{feasible set}
    \mathcal{A} = \left\{ \Delta \tilde{\mu}  \mid M^2 C_\textrm{R} \left[ \operatorname{sinc}^2(\Delta \tilde{\mu} \Delta T) - \operatorname{sinc}^2(\mu_{\textrm{V}} \Delta T)f^2(K,\pi \Delta T (\mu_{\textrm{V}} - \Delta \tilde{\mu})) \right] - C_\textrm{V} \left[ 1 - \operatorname{sinc}^2(T (\mu_{\textrm{V}} - \Delta \tilde{\mu})) \right] \geq 0  \right\}.
\end{equation}
\vspace{-25pt}
\end{figure*}
\begin{theorem}
\label{theorem 1}
Under the condition on $M$ in Lemma \ref{lemma 1}, the feasible spoofing frequency set is given in \eqref{feasible set}.
\end{theorem}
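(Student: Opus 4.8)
The plan is to obtain $\mathcal{A}$ by directly instantiating the approximation \eqref{match filter approx} at the two competing frequencies and rearranging the resulting scalar inequality. Since Theorem \ref{theorem 1} inherits the condition on $M$ from Lemma \ref{lemma 1}, I may use \eqref{match filter approx} throughout, and by Lemma \ref{lemma 2} it suffices to test the single dominant spoofing peak, located at $\Delta\tilde{\mu}=\tilde{\mu}\bmod\frac{1}{\Delta T}$. Hence the defining condition $C(\hat{\tau}_{\textrm{V}},\Delta\tilde{\mu})\geq C(\hat{\tau}_{\textrm{V}},\mu_{\textrm{V}})$ of $\mathcal{A}$ reduces to comparing \eqref{match filter approx} evaluated at $\mu=\Delta\tilde{\mu}$ against its value at $\mu=\mu_{\textrm{V}}$.

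First I would evaluate $C(\hat{\tau}_{\textrm{V}},\Delta\tilde{\mu})$. Setting $\mu=\Delta\tilde{\mu}$ in \eqref{match filter approx}, the spoofing argument vanishes, so I use $f(K,0)=\lim_{x\to0}\frac{\sin Kx}{K\sin x}=1$, giving the \ac{RIS} contribution $M^2 C_{\textrm{R}}\operatorname{sinc}^2(\Delta\tilde{\mu}\,\Delta T)$ and the residual \ac{VU} contribution $C_{\textrm{V}}\operatorname{sinc}^2(T(\Delta\tilde{\mu}-\mu_{\textrm{V}}))$. Next I would evaluate $C(\hat{\tau}_{\textrm{V}},\mu_{\textrm{V}})$: setting $\mu=\mu_{\textrm{V}}$ makes $\operatorname{sinc}^2(T(\mu_{\textrm{V}}-\mu_{\textrm{V}}))=\operatorname{sinc}^2(0)=1$, so the \ac{VU} term becomes $C_{\textrm{V}}$ while the \ac{RIS} term becomes $M^2 C_{\textrm{R}}\operatorname{sinc}^2(\mu_{\textrm{V}}\Delta T)f^2(K,\pi\Delta T(\mu_{\textrm{V}}-\Delta\tilde{\mu}))$. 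Subtracting the two expressions, using the evenness of $\operatorname{sinc}^2$ to write $\operatorname{sinc}^2(T(\Delta\tilde{\mu}-\mu_{\textrm{V}}))=\operatorname{sinc}^2(T(\mu_{\textrm{V}}-\Delta\tilde{\mu}))$, the common prefactor $PT^2\gamma_{\textrm{B}}^2$ cancels and the inequality groups termwise into exactly \eqref{feasible set}.

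The routine part is the algebra above; the step that needs care is justifying that \eqref{match filter approx} remains a valid surrogate for $C$ at both test points. At $\mu=\mu_{\textrm{V}}$ the dropped cross-product carries the factor $f(K,\pi\Delta T(\mu_{\textrm{V}}-\Delta\tilde{\mu}))$, which is not itself small, so I would justify the omission on magnitude grounds instead: the retained \ac{RIS} self-term scales as $M^2 C_{\textrm{R}}$ whereas the cross-term scales as $M\sqrt{C_{\textrm{V}}C_{\textrm{R}}}$, and the hypothesis $M\gg\sqrt{\kappa_{\textrm{V}}/(4\pi\eta)}\,(\lambda/S)\,|f(N_t,\cdot)/f(N_t,\cdot)|$ of Lemma \ref{lemma 1} is precisely what forces the latter to be negligible relative to the former. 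I would therefore close the proof by noting that the same large-$M$ regime that legitimizes \eqref{match filter approx} also guarantees the cross-term is subdominant both at $\mu=\mu_{\textrm{V}}$ and at the peak $\mu=\Delta\tilde{\mu}$, so that the characterization \eqref{feasible set} holds under exactly the stated condition on $M$.
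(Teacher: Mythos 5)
Your proposal is correct and follows essentially the same route as the paper: the paper's proof simply states that \eqref{feasible set} is obtained by finding $\Delta\tilde{\mu}$ satisfying $C(\hat{\tau}_{\textrm{V}},\Delta\tilde{\mu})\geq C(\hat{\tau}_{\textrm{V}},\mu_{\textrm{V}})$ from \eqref{match filter approx}, which is exactly the two-point evaluation and rearrangement you carry out. Your additional justification for dropping the cross-term (via the large-$M$ condition of Lemma \ref{lemma 1}) is a more explicit version of the paper's own remark that $f(K,\pi\Delta T(\mu-\tilde{\mu}))\operatorname{sinc}(T(\mu-\mu_{\textrm{V}}))\approx 0$ for $\tilde{\mu}\notin\mathcal{A}_{\varnothing}$, so it elaborates rather than departs from the paper's argument.
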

\begin{proof}
The proof can be directly completed by finding $\Delta \tilde{\mu}$ that satisfies $C(\hat{\tau}_{\textrm{V}}, \Delta \tilde{\mu}) \geq C(\hat{\tau}_{\textrm{V}}, \mu_{\textrm{V}})$ based on \eqref{match filter approx}.
\end{proof}

Theorem \ref{theorem 1} indicates that, if the \ac{RIS} selects a spoofing frequency $\Delta \tilde{\mu} \in \mathcal{A}$, the estimated Doppler shift will be spoofed as $\hat{\mu}_{\textrm{V}} = \Delta \tilde{\mu}$. According to the derived result in \eqref{feasible set}, the range of feasible spoofing frequencies expands as more reflecting elements are deployed at the \ac{RIS}. In addition, the beam orientation $\theta_0$ also influences the set $\mathcal{A}$. Generally, when $\theta_0$ is closer to the \ac{VU}'s \ac{AoD} $\theta_{\textrm{V}}$, the spoofing set $\mathcal{A}$ becomes narrower, whereas a beam direction closer to the \ac{RIS}'s \ac{AoD} $\theta_{\textrm{R}}$ leads to a wider set. This phenomenon poses a critical challenge when \ac{ISAC} is leveraged for beam tracking. For instance, when a \ac{VU} moves through the coverage of the \ac{RSU}, there inevitably exists moments when the beam is steered closer to the \ac{RIS}. At such instances, the \ac{RIS} can select a spoofing frequency $\Delta \tilde{\mu}$ that results significant beam misalignment in the subsequent tracking step. More critically, this misalignment may accumulate over time, leading the \ac{RSU} to lose track of the \ac{VU} eventually.
\subsection{Impact on \ac{AoD} \ac{MLE}} 
\vspace{-3pt}
Given the estimated $\hat{\tau}_{\textrm{V}}$ and $\hat{\mu}_{\textrm{V}}$, the \ac{RSU} will continue to estimate $\theta_{\textrm{V}}$. First, the compensated and normalized echo is given by:
\begin{equation}
\label{compensate signal}
\hat{\boldsymbol{y}}_{\textrm{E}} \triangleq \frac{1}{\sqrt{P} \gamma_{\textrm{B}}} \int_{0}^{T} \boldsymbol{y}_{\textrm{E}}(t) s^{*}(t - \hat{\tau}) e^{-j 2 \pi \Delta \tilde{\mu} t} dt.
\end{equation}
We next define a perfect \ac{MLE} $\tilde{\theta}_{\textrm{V}}$ obtained without spoofing and examine the deviation between the spoofed \ac{MLE} $\hat{\theta}_\textrm{V}$ and $\tilde{\theta}_{\textrm{V}}$. First, we rewrite \eqref{compensate signal} as:
\begin{equation}
\begin{aligned}
\hat{\boldsymbol{y}}_{\textrm{E}}
= & \ T \beta_{\textrm{V}}  \boldsymbol{b}_B(\theta_{\textrm{V}}) h(\theta_{\textrm{V}}, \theta_{0}) e^{-j\pi(\Delta \tilde{\mu}-\mu_{\textrm{V}})T}  \operatorname{sinc}(T (\Delta \tilde{\mu} - \mu_{\textrm{V}})) \\
& \ + M T \beta_{\textrm{R}} \boldsymbol{b}_{\textrm{B}}(\theta_{\textrm{R}}) h(\theta_{\textrm{R}}, \theta_{0}) e^{j \pi \Delta \tilde{\mu} \Delta T} \operatorname{sinc}(\Delta \tilde{\mu} \Delta T) + \hat{\boldsymbol{z}}_{\textrm{E}},
\end{aligned}
\end{equation}
where $\hat{\boldsymbol{z}}_{\textrm{E}} = \frac{\tilde{\boldsymbol{z}}_{\textrm{E}}(\hat{\tau}, \Delta \tilde{\mu}) }{\sqrt{P}\gamma_{\textrm{B}}} \sim \mathcal{C N}\left(\boldsymbol{0}_{N_r}, \frac{\sigma^2T}{P \gamma_{\textrm{B}}^2} \boldsymbol{I}_{N_r}\right)$ and $h(\theta_1, \theta_2) = e^{-\frac{j \pi(N_t -1)}{2} [ \cos{\theta_1} - \cos{\theta_2} ] } f(N_t, \frac{\pi}{2} (\cos{\theta_1} - \cos{\theta_{2}}) )$. Similar to \eqref{MLE spoofed}, we define the perfect \ac{MLE} without spoofing as follows:
\begin{equation}
\label{MLE perfect}
    \tilde{\theta}_{\textrm{V}}=\arg \max _{\theta_{\textrm{V}}} p( \hat{\boldsymbol{y}}\mid \theta_{\textrm{V}}),
\end{equation}
where $\hat{\boldsymbol{y}} = T \beta_{\textrm{V}}   \boldsymbol{b}_B(\theta_{\textrm{V}}) h(\theta_{\textrm{V}}, \theta_{0}) + \hat{\boldsymbol{z}}_{\textrm{E}}$. Here, $\hat{\boldsymbol{y}}$ represents the normalized received signal compensated by perfect estimation $\hat{\tau}_{\textrm{V}} = \tau_{\textrm{V}}$ and $\hat{\mu}_{\textrm{V}} = \mu_{\textrm{V}}$, without the \ac{RIS} spoof. Thus, $\tilde{\theta}_\textrm{V}$ in \eqref{MLE perfect} is the best \ac{MLE} on $\theta_{\textrm{V}}$ we can obtain. 

Next, we derive the deviation between the perfect \ac{MLE} $\tilde{\theta}_{\textrm{V}}$ and the spoofed \ac{MLE} $\hat{\theta}_{\textrm{V}}$ in the following theorem.
\begin{theorem}
\label{theorem 2}
Assume the perfect delay estimation $\hat{\tau}_\textrm{V} = \tau_{\textrm{V}}$ and spoofed Doppler shift estimation $\hat{\mu}_{\textrm{V}} = \Delta \tilde{\mu}$ are obtained under \ac{RIS} spoofing. The spoofed \ac{MLE} $\hat{\theta}_{\textrm{V}}$ can be given by:
\begin{equation}
\label{MLE spoofed analytical in theorem}
\begin{aligned}
\hat{\theta}_{\textrm{V}} = & \arg \min _{\theta_{\textrm{V}}} \Biggl [  \left\| \hat{\boldsymbol{y}} - T \beta_{\textrm{V}} \boldsymbol{b}_B(\theta_{\textrm{V}}) h(\theta_{\textrm{V}}, \theta_{0})\right\|^2 +  \\
& \ 2 T \beta_{\textrm{V}} \sum_{n=1}^{N_r} \Re{\left\{ \Delta \hat{y}_n g_{n}^{*}(\theta_{\textrm{V}}, \theta_0) \right\}} \Biggr ],
\end{aligned}
\end{equation}
where $\Delta \hat{\boldsymbol{y}} \triangleq \hat{\boldsymbol{y}}_{\textrm{E}} - \hat{\boldsymbol{y}} = \left[ \Delta  \hat{y}_1, \ldots,  \Delta  \hat{y}_{N_r} \right]^{T}$. Moreover, the perfect \ac{MLE} $\hat{\theta}_{\textrm{V}}$ defined in \eqref{MLE perfect} can be similarly given by: 
\begin{equation}
\label{MLE perfect analytical}
\begin{aligned}
\tilde{\theta}_{\textrm{V}} = \arg \min _{\theta_{\textrm{V}}} \left\|\hat{\boldsymbol{y}} -  T\beta_{\textrm{V}} \boldsymbol{b}_B(\theta_{\textrm{V}}) h(\theta_{\textrm{V}}, \theta_{0})\right\|^2.
\end{aligned}
\end{equation}
\end{theorem}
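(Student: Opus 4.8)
The plan is to reduce both maximum-likelihood problems in \eqref{MLE spoofed} and \eqref{MLE perfect} to least-squares problems and then expand the perturbed objective. First I would make the Gaussian likelihood explicit. Since the compensated noise $\hat{\boldsymbol{z}}_{\textrm{E}}$ is circularly symmetric Gaussian with the scaled-identity covariance $\frac{\sigma^2 T}{P\gamma_{\textrm{B}}^2}\boldsymbol{I}_{N_r}$, the density $p(\hat{\boldsymbol{y}}_{\textrm{E}}\mid\theta_{\textrm{V}})$ is proportional to $\exp(-\|\hat{\boldsymbol{y}}_{\textrm{E}}-\boldsymbol{m}(\theta_{\textrm{V}})\|^2/\hat{\sigma}^2)$ with $\hat{\sigma}^2\triangleq\frac{\sigma^2 T}{P\gamma_{\textrm{B}}^2}$, where $\boldsymbol{m}(\theta_{\textrm{V}})\triangleq T\beta_{\textrm{V}}\boldsymbol{b}_B(\theta_{\textrm{V}})h(\theta_{\textrm{V}},\theta_0)$ is the noise-free signal mean read off from the definition of $\hat{\boldsymbol{y}}$. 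Because the log-likelihood is monotone in this exponent and the covariance is a scaled identity (so no whitening is required), $\arg\max_{\theta_{\textrm{V}}}p$ becomes $\arg\min_{\theta_{\textrm{V}}}\|\cdot-\boldsymbol{m}(\theta_{\textrm{V}})\|^2$. Applying this with the observation $\hat{\boldsymbol{y}}$ yields \eqref{MLE perfect analytical} immediately; this half is essentially definitional.

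For the spoofed estimator I would apply the same reduction to the observation $\hat{\boldsymbol{y}}_{\textrm{E}}$ and then substitute $\hat{\boldsymbol{y}}_{\textrm{E}}=\hat{\boldsymbol{y}}+\Delta\hat{\boldsymbol{y}}$. A key point I would stress is that $\hat{\boldsymbol{y}}_{\textrm{E}}$ and $\hat{\boldsymbol{y}}$ carry the same noise realization $\hat{\boldsymbol{z}}_{\textrm{E}}$, so $\Delta\hat{\boldsymbol{y}}$ is deterministic and, in particular, independent of the optimization variable $\theta_{\textrm{V}}$ (the $\theta_{\textrm{V}}$ inside $\hat{\boldsymbol{y}}$ is the fixed true angle, whereas the $\theta_{\textrm{V}}$ in $\boldsymbol{m}(\theta_{\textrm{V}})$ is the search variable). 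Expanding with $\|\boldsymbol{u}+\boldsymbol{v}\|^2=\|\boldsymbol{u}\|^2+2\Re\{\boldsymbol{u}^H\boldsymbol{v}\}+\|\boldsymbol{v}\|^2$ for $\boldsymbol{u}=\hat{\boldsymbol{y}}-\boldsymbol{m}(\theta_{\textrm{V}})$ and $\boldsymbol{v}=\Delta\hat{\boldsymbol{y}}$ produces three pieces: the perfect residual $\|\hat{\boldsymbol{y}}-\boldsymbol{m}(\theta_{\textrm{V}})\|^2$, a cross term $2\Re\{(\hat{\boldsymbol{y}}-\boldsymbol{m}(\theta_{\textrm{V}}))^H\Delta\hat{\boldsymbol{y}}\}$, and $\|\Delta\hat{\boldsymbol{y}}\|^2$. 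I would then discard the two summands that do not involve the search variable, namely $\|\Delta\hat{\boldsymbol{y}}\|^2$ and the constant part $\Re\{\hat{\boldsymbol{y}}^H\Delta\hat{\boldsymbol{y}}\}$ of the cross term, since they do not affect the $\arg\min$.

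What survives is the single $\theta_{\textrm{V}}$-dependent cross term $-2\Re\{\boldsymbol{m}(\theta_{\textrm{V}})^H\Delta\hat{\boldsymbol{y}}\}$, which I would rewrite componentwise. Writing the $n$-th entry of the model as $[\boldsymbol{m}(\theta_{\textrm{V}})]_n=T\beta_{\textrm{V}}g_n(\theta_{\textrm{V}},\theta_0)$ — which follows from the definitions of $\boldsymbol{b}_B(\cdot)$, $h(\cdot,\cdot)$, and $g_n(\cdot,\cdot)$ after collecting the per-antenna phase factors — converts this inner product into the real sum $T\beta_{\textrm{V}}\sum_{n=1}^{N_r}\Re\{\Delta\hat{y}_n\,g_n^{*}(\theta_{\textrm{V}},\theta_0)\}$, which is exactly the second term inside the $\arg\min$ of \eqref{MLE spoofed analytical in theorem}. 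I expect this last bookkeeping step to be the main obstacle: one has to track the half-wavelength phase conventions of $\boldsymbol{a}_B$ and $\boldsymbol{b}_B$ and the conjugation hidden in $h$ so that the collected per-antenna factors coincide with the definition of $g_n(\theta_{\textrm{V}},\theta_0)$ and so that the sign and the complex conjugate emerge as stated. Everything preceding it is a routine Gaussian-likelihood reduction followed by a binomial expansion in which the noise-independent terms are dropped.
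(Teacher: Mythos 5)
Your proposal follows the paper's proof essentially step for step: reduce both MLEs in \eqref{MLE spoofed} and \eqref{MLE perfect} to least-squares problems via the scaled-identity complex Gaussian likelihood, obtain the spoofed estimator by applying the same least-squares criterion to $\hat{\boldsymbol{y}}_{\textrm{E}}$ (the RSU being unaware of the RIS), substitute $\hat{\boldsymbol{y}}_{\textrm{E}}=\hat{\boldsymbol{y}}+\Delta\hat{\boldsymbol{y}}$, expand the square, drop the terms independent of the search variable, and rewrite the surviving cross term per antenna. This is exactly the paper's decomposition $\hat{\boldsymbol{y}}_{\textrm{E}}-\boldsymbol{m}(\theta_{\textrm{V}})=(\hat{\boldsymbol{y}}_{\textrm{E}}-\hat{\boldsymbol{y}})+(\hat{\boldsymbol{y}}-\boldsymbol{m}(\theta_{\textrm{V}}))$, so there is no difference in route.

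There is, however, one point you gloss over at the very last step, and it is worth naming because it is precisely where your (correct) algebra and the stated theorem part ways. Your expansion leaves the $\theta_{\textrm{V}}$-dependent cross term $-2\Re\{\boldsymbol{m}(\theta_{\textrm{V}})^{H}\Delta\hat{\boldsymbol{y}}\}$, which componentwise is $-2T\beta_{\textrm{V}}\sum_{n}\Re\{\Delta\hat{y}_{n}g_{n}^{*}(\theta_{\textrm{V}},\theta_{0})\}$ (granting, as the paper does, that the complex $\beta_{\textrm{V}}$ may be pulled outside $\Re\{\cdot\}$ and that the per-antenna phase factors collect into $g_{n}$). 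The second term in \eqref{MLE spoofed analytical in theorem} has the opposite sign, $+2T\beta_{\textrm{V}}\sum_{n}\Re\{\Delta\hat{y}_{n}g_{n}^{*}(\theta_{\textrm{V}},\theta_{0})\}$, so your assertion that your cross term ``is exactly'' that term silently flips a sign, and for an $\arg\min$ this sign is not cosmetic. The origin of the mismatch lies in the paper itself: in its chain of equalities the binomial expansion, which actually produces $+2\Re\{(\hat{\boldsymbol{y}}_{\textrm{E}}-\hat{\boldsymbol{y}})^{H}(\hat{\boldsymbol{y}}-T\beta_{\textrm{V}}\boldsymbol{b}_B(\theta_{\textrm{V}})h(\theta_{\textrm{V}},\theta_{0}))\}$, is written with a minus sign, and the stated $+$ then follows consistently from that slip. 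So your derivation is, if anything, more careful than the paper's up to the final identification; to finish cleanly you must either carry your minus sign into the conclusion (equivalently, restate the correction term with the conjugation arranged so that the signs agree) or explicitly flag the discrepancy rather than assert equality.
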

\begin{proof}
We first derive $\tilde{\theta}_{\textrm{V}}$ defined in \eqref{MLE perfect}. Since $\hat{\boldsymbol{y}} = T \beta_{\textrm{V}}   \boldsymbol{b}_B(\theta_{\textrm{V}}) h(\theta_{\textrm{V}}, \theta_{0}) + \hat{\boldsymbol{z}}_{\textrm{E}}$ and $\hat{\boldsymbol{z}}_{\textrm{E}} \sim \mathcal{C N}\left(\boldsymbol{0}_{N_r}, \frac{\sigma^2}{P \gamma_{\textrm{B}}^2} \boldsymbol{I}_{N_r}\right)$, \eqref{MLE perfect analytical} can be directly obtained based on the \ac{PDF} of multivariate normal distribution. For the spoofed \ac{MLE}, since the \ac{RSU} is unaware of the presence of the malicious \ac{RIS}, $\hat{\theta}_{\textrm{V}}$ is obtained by directly replacing $\hat{\boldsymbol{y}}$ in \eqref{MLE perfect analytical} with the superimposed echo $\hat{\boldsymbol{y}}_{\textrm{E}}$ as:
\begin{equation}
\label{MLE spoofed analytical}
\begin{aligned}
\hat{\theta}_{\textrm{V}} = \arg \min _{\theta_{\textrm{V}}} \left\|\hat{\boldsymbol{y}}_{\textrm{E}} - T \beta_{\textrm{V}} \boldsymbol{b}_B(\theta_{\textrm{V}}) h(\theta_{\textrm{V}}, \theta_{0})\right\|^2.
\end{aligned}
\end{equation}
We can further derive \eqref{MLE spoofed analytical} as following
\begin{equation}
\begin{aligned}
\hat{\theta}_{\textrm{V}}
= & \arg \min _{\theta_{\textrm{V}}} \left\|\hat{\boldsymbol{y}}_{\textrm{E}} - \hat{\boldsymbol{y}} + \hat{\boldsymbol{y}} - T \beta_{\textrm{V}} \boldsymbol{b}_B(\theta_{\textrm{V}}) h(\theta_{\textrm{V}}, \theta_{0})\right\|^2 \\
= & \arg \min _{\theta_{\textrm{V}}} \left\| \hat{\boldsymbol{y}} - T \beta_{\textrm{V}} \boldsymbol{b}_B(\theta_{\textrm{V}}) h(\theta_{\textrm{V}}, \theta_{0})\right\|^2 - \\
& \ 2 \Re{\left\{ (\hat{\boldsymbol{y}}_{\textrm{E}} - \hat{\boldsymbol{y}})^{H} (\hat{\boldsymbol{y}} - T \beta_{\textrm{V}} \boldsymbol{b}_B(\theta_{\textrm{V}}) h(\theta_{\textrm{V}}, \theta_{0}))  \right\}} \\
= & \arg \min _{\theta_{\textrm{V}}} \Biggl [  \left\| \hat{\boldsymbol{y}} - T \beta_{\textrm{V}} \boldsymbol{b}_B(\theta_{\textrm{V}}) h(\theta_{\textrm{V}}, \theta_{0})\right\|^2 +  \\
& \ 2 T \beta_{\textrm{V}} \sum_{n=1}^{N_r} \Re{\left\{ \Delta \hat{y}_n g_{n}^{*}(\theta_{\textrm{V}}, \theta_0) \right\}} \Biggr ],
\end{aligned}
\end{equation}
which completes the proof.
\end{proof}
\noindent

From Theorem \ref{theorem 2}, we can observe that the spoofed \ac{MLE} deviates from the perfect \ac{MLE} because of the term $2 T \beta_{\textrm{V}} \sum_{n=1}^{N_r} \Re{\left\{ \Delta \hat{y}_n g_{n}^{*}(\theta_{\textrm{V}}, \theta_0) \right\}}$, which is introduced by the malicious \ac{RIS}. Although it is challenging to derive a closed-form expression for $\left| \tilde{\theta}_{\textrm{V}} - \hat{\theta}_{\textrm{V}}\right|$, we can still find that a spoofed \ac{MLE} $ \hat{\theta}_{\textrm{V}} \neq \tilde{\theta}_{\textrm{V}}$ will be obtained if the minimum of $\left\| \hat{\boldsymbol{y}} - T \beta_{\textrm{V}} \boldsymbol{b}_B(\theta_{\textrm{V}}) h(\theta_{\textrm{V}}, \theta_{0})\right\|^2$ shifts after adding the term $2 T \beta_{\textrm{V}} \sum_{n=1}^{N_r} \Re{\left\{ \Delta \hat{y}_n g_{n}^{*}(\theta_{\textrm{V}}, \theta_0) \right\}}$.

\section{Simulation Results and Analysis}
\label{Section IV}
For our simulations, we consider a two dimensional Cartesian coordinate system with the \ac{RSU} located at its origin. The \ac{AoD}s of the \ac{VU} and \ac{RIS} are $\theta_{\textrm{V}} = 135^{\circ}$ and $\theta_{\textrm{R}} = 90^{\circ}$. The coordinates of the \ac{VU} and \ac{RIS} are $(x_{\textrm{V}}, y_{\textrm{V}}) = (40 \times \cos{\theta_{\textrm{V}}}, 40 \times \sin{\theta_{\textrm{V}}})$ and $(x_{\textrm{R}}, y_{\textrm{R}}) = (30 \times \cos{\theta_{\textrm{R}}}, 30 \times \sin{\theta_{\textrm{R}}})$ both in meters. The other parameters are set as follows unless specified otherwise later: $P = 30$~dBm, $\sigma^2 = -130$~dBm, $f_c = 28$~GHz, $N_t = N_r = 32$, $M = 32$, $\kappa_{\textrm{V}} = 7$~dBsm, $\eta = 0.8$, $S = 50~\text{cm} \times 10~\text{cm}$, $T = 10$~ms, and $\Delta T = 1$~ms. The \ac{VU} is assumed to move along the positive x-axis at a speed of $v = 10$~m/s, and, thus, the Doppler shift can be derived as $\mu_{\textrm{V}} = v f_c \cos{(\pi - \theta_{\textrm{V}})}$.
 
\begin{figure}[t]
	\centering	\includegraphics[scale=0.48]{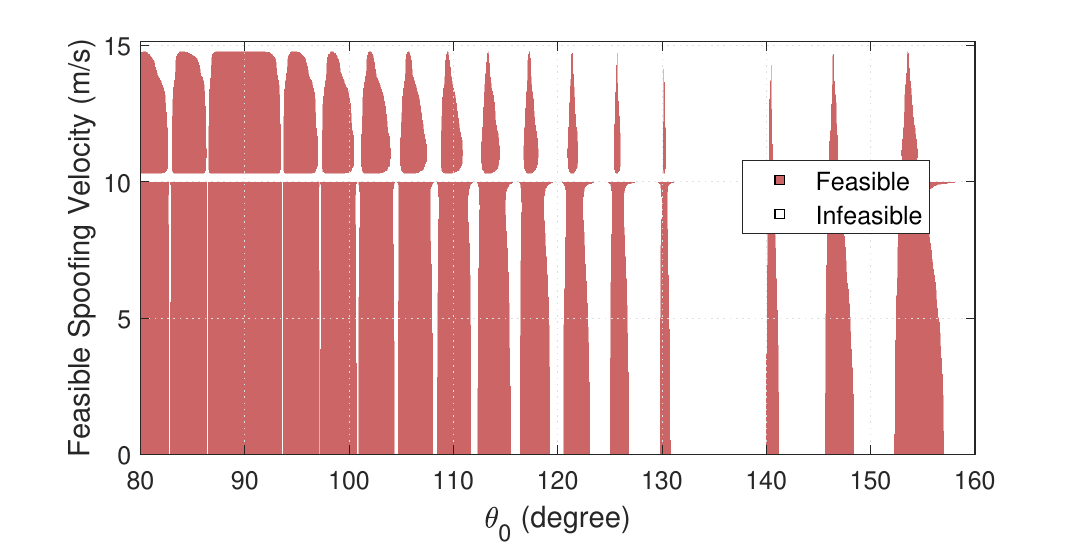}
	\caption{\small{Feasible spoofing velocity versus beam steering direction.}}\
    \vspace{-10pt}
    \label{feasible_spoofing_velocity}
\end{figure}
Fig. \ref{feasible_spoofing_velocity} shows the impact of \ac{RIS} spoofing on the estimated velocity of the \ac{VU}, which can be derived by the spoofed Doppler shift. In particular, the different feasible spoofing velocity sets, derived from \eqref{feasible set}, versus \ac{RSU}'s beam steering direction $\theta_0$ are illustrated. From Fig. \ref{feasible_spoofing_velocity}, we can observe that, generally, a beam steered closer to \ac{VU}'s \ac{AoD} $\theta_{\textrm{V}} = 135^{\circ}$ can eliminate the impact of \ac{RIS} spoofing. For instance, when $\theta_0 \in (131^{\circ}, 140^{\circ})$, we have $\mathcal{A} = \varnothing$, indicating that no successful spoofing can be conducted. However, as $\theta_0$ is in the vicinity of $\theta_{\textrm{R}} = 90^{\circ}$, i.e., steered towards the \ac{RIS}, the \ac{RSU} may obtain a spoofed velocity ranging from $0.1$~m/s to $14.9$~m/s, introducing a significant deviation of $-9.9$~m/s to $4.9$~m/s with respect to the true value $v = 10$~m/s.

\begin{figure}[t]
	\centering	\includegraphics[scale=0.52]{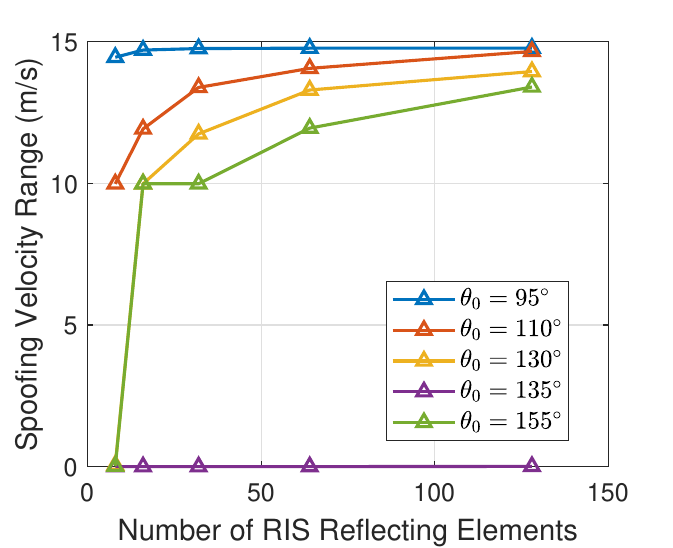}
	\caption{\small{Spoofing velocity range versus number of \ac{RIS} reflecting elements.}}
    \label{spoofing_velocity_range}
\end{figure}
Fig. \ref{spoofing_velocity_range} shows the range of feasible spoofing velocity, i.e., the difference between maximal and minimal spoofing velocities, versus the number of \ac{RIS} reflecting elements. We can see that the range of feasible spoofing velocity expands as the number of reflecting elements increases, except when the \ac{RSU}'s beam is set to be perfectly steered towards the \ac{VU} with $\theta_0 = 135^{\circ}$. Moreover, we can observe that the range is large for $\theta_0 = 90^{\circ}$ and $\theta_0 = 110^{\circ}$ since these $\theta_0$ are close to the \ac{RIS}. However, the range is only $1$~m/s lower for $\theta_0 = 130^{\circ}$ with only $5^{\circ}$ misalignment with respect to $\theta_{\textrm{V}}$. Moreover, even if the beam direction is far to both \ac{VU} and \ac{RIS}, the spoofing can be still conducted as long as the \ac{RIS} is equipped with enough reflecting elements, as shown in the case $\theta_0 = 155^{\circ}$.

\begin{figure}[t]
	\centering	\includegraphics[scale=0.54]{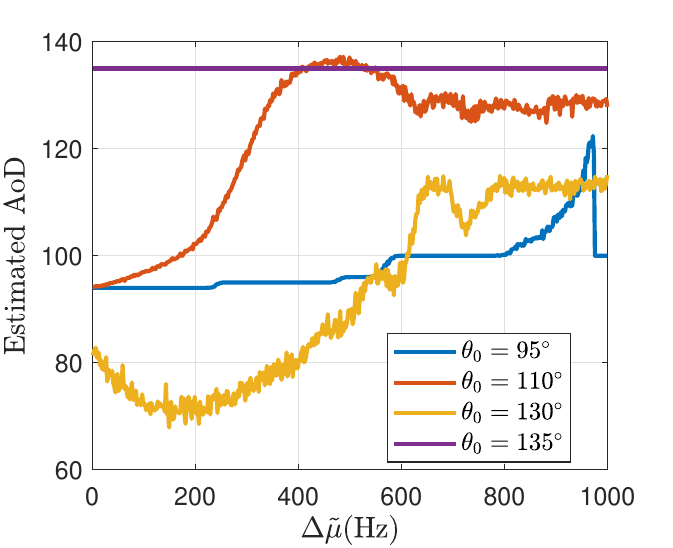}
	\caption{\small{Impact of Doppler shift spoofing on the resulted \ac{MLE} of \ac{AoD}.}}
    \label{spoofed_AoD}
\end{figure}
In Fig. \ref{spoofed_AoD}, we show the spoofed \ac{MLE} for \ac{AoD} as a function of the spoofing frequency on the Doppler shift estimation. The estimated \ac{AoD} are averaged over \num{2000} trials of independent noise. First, we can observe that the \ac{MLE} for \ac{AoD} will not be affected under a beam perfectly steered to the \ac{VU} when $\theta_0 = 135^{\circ}$. However, the \ac{RIS} can always find a spoofing frequency that causes a severe deviation of the spoofed \ac{MLE} under beam misalignment. For instance, when $\theta_0 = 130^{\circ}$ is only $5^{\circ}$ deviating from $\theta_{\textrm{V}}$, its \ac{MLE} can be spoofed as $70^{\circ}$ with $\Delta \tilde{\mu} = 180$~Hz, which leads to an estimation error of $65^{\circ}$. 

\vspace{-3pt}
\section{Conclusion}
\label{Section V}
\vspace{-3pt}
In this paper, we have investigated a novel \ac{RIS}-aided spoofing strategy that compromises the sensing safety of the \ac{ISAC} system. In particular, we have proposed a time-varying phase shift design at the \ac{RIS} to spoof the \ac{RSU}'s sensing outcome of the \ac{VU}'s Doppler shift and \ac{AoD}. Moreover, we have analytically derived the feasible spoofing frequency set with respect to the Doppler shift and the deviation of the spoofed \ac{MLE} for \ac{AoD}. The derived expression has indicated that the number of reflecting elements at the \ac{RIS} and its \ac{AoD} with respect to the \ac{RSU} significantly affect the spoofed sensing outcome. Simulation results demonstrate that, even with a beam misalignment of  only $5^\circ$ with respect to the \ac{VU}, the \ac{RIS}-aided spoofing can lead to a relative velocity estimation error ranging from $-9.9$~m/s to $4$~m/s and an \ac{AoD} estimation error of up to $65^\circ$. 

\bibliographystyle{IEEEtran}
\bibliography{bibliography}
\end{document}